\DeclareMathOperator*{\opt}{opt}
\newcommand{\cmark}{\ding{51}}%
\newcommand{\xmark}{\ding{55}}%
\pgfplotsset{tikzDefaults/.style={
		enlargelimits=0,
		ticklabel style = {font=\scriptsize},
		label style = {font=\scriptsize},
		legend style = {font=\scriptsize},
		thin,
		mark size=1pt,
	}}
\pgfplotsset{compat=newest}
\lstdefinelanguage{Julia}%
{morekeywords={abstract,break,case,catch,const,continue,do,else,elseif,%
		end,export,false,for,function,immutable,import,importall,if,in,%
		macro,module,otherwise,quote,return,switch,true,try,type,typealias,%
		using,while},%
	sensitive=true,%
	alsoother={\$},%
	morecomment=[l]\#,%
	morecomment=[n]{\#=}{=\#},%
	morestring=[s]{"}{"},%
	morestring=[m]{'}{'},%
}[keywords,comments,strings]%
\bfseries\color{blue},
\title{Kleinberg's Grid Reloaded}
\author[1]{Fabien Mathieu}
\affil[1]{Nokia Bell Labs, route de Villejust, 91620 Nozay\\
  \texttt{fabien.mathieu@nokia-bell-labs.com}}
\authorrunning{Fabien Mathieu} 
\subjclass{C.2.2 Routing Protocols; C.2.4 Distributed Systems}
\keywords{Small-World Routing; Kleinberg's Grid; Simulation; Rejection Sampling}
\begin{document}

\maketitle

\begin{abstract}
One of the key features of small-worlds is the ability to route messages with few hops only using local knowledge of the topology. In 2000, Kleinberg proposed a model based on an augmented grid that asymptotically exhibits such property.

In this paper, we propose to revisit the original model from a simulation-based perspective. Our approach is fueled by a new algorithm that uses dynamic rejection sampling to  draw augmenting links. The speed gain offered by the algorithm enables a detailed numerical evaluation. We show for example that in practice, the augmented scheme proposed by Kleinberg is more robust than predicted by the asymptotic behavior, even for very large finite grids. We also propose tighter bounds on the performance of Kleinberg's routing algorithm. At last, we show that fed with realistic parameters, the model gives results in line with real-life experiments.
\end{abstract}

\section{Introduction}

In a prescient 1929 novel called \emph{Láncszemek} (in English, \emph{Chains}), 
Karinthy imagines that any two people can be connected by a small chain of personal links, using no more than \emph{five} intermediaries~\cite{karinthy29chains}.

Years later, Milgram validates the concept by conducting real-life experiments. He asks volunteers to transmit a letter to an acquaintance with the objective to reach a target destination across the United States~\cite{milgram67small,milgram69experimental}. While not all messages arrive, successful attempts reach destination after six hops in average, popularizing the notion of \emph{six degrees of separation}.

Yet, for a long time, no theoretical model could explain why and how this kind of \emph{small-world} routing works.
One of the first and most famous attempts to provide such a model is due to Jon Kleinberg \cite{Kleinberg00thesmall-world}. He proposes to abstract the social network by a grid augmented with \emph{shortcuts}. If the shortcuts follow a heavy tail distribution with a specific exponent, then a simple greedy routing can reach any destination in a short time ($ O(\log^2(n)) $ hops). On the other hand, if the exponent is wrong, then the time to reach destination becomes $ \Omega(n^\alpha) $ for some $ \alpha $. This seminal work has led to multiple studies from both the theoretical and empirical social systems communities. 

\subsection*{Contribution}

In this paper, we propose a new way to numerically benchmark the greedy routing algorithm in the original model introduced by Kleinberg. Our approach uses dynamic rejection sampling, which gives a substantial speed improvement compared to previous attempts, without making any concession about the assumptions made in the original model, which is kept untouched.

Fueled by the capacity to obtain quick and accurate results even for very large grids, we give a fresh look on Kleinberg's grid, through three independent small studies. First, we show that the model is in practice more robust than expected: for grids of given size there is quite a large range of exponents that grant short routing paths. Then we observe that the lower bounds proposed by Kleinberg in \cite{Kleinberg00thesmall-world} are not tight and suggest new bounds. Finally, we compare Kleinberg's grid to Milgram's experiment, and observe that when the grid parameters are correctly tuned, the performance of greedy routing is consistent with the \emph{six degrees of separation} phenomenon.

\subsection*{Roadmap}

Section \ref{sec:related-work} presents the original augmented grid model introduced by Kleinberg and the greedy routing algorithm. A brief overview of the main existing theoretical and experimental studies is provided, with a strong emphasis on the techniques that can be used for the numerical evaluation of Kleinberg's model.

In Section \ref{sec:simulation-design}, we give our algorithm for estimating the performance of greedy routing. We explain the principle of dynamic rejection sampling and detail why it allows to perfectly emulate Kleinberg's grid with the same speed that can be achieved by toroidal approximations. We also give a performance evaluation of the simulator based on our solution. For readers interested in looking under the hood, a fully working code (written in Julia) is given in Appendix \ref{sec:code}.

To show the algorithm benefits, we propose in Section \ref{sec:applications} three small studies that investigate Kleinberg's model from three distinct perspectives: robustness of greedy routing with respect to the shortcut distribution (Section \ref{sec:efficient-enough-exponents});  tightness of the existing theoretical bounds (Section \ref{sec:asymptotic-behavior}); emulation of Milgram's experiment within Kleinberg's model (Section \ref{sec:six-degrees-of-separation}).

\section{Model and Related Work}\label{sec:related-work}

We present here the model and notation introduced by Kleinberg in~\cite{Kleinberg00thesmall-world,Kleinberg00navigation}, some key results, and a brief overview of the subsequent work on the matter.

\subsection{Kleinberg's Grid}
\label{sec:rw_kb}

In \cite{Kleinberg00thesmall-world}, Kleinberg considers a model of directed random graph $ G(n,r,p,q) $, where $ n, p, q $ are positive integers and $ r $ is a non-negative real number. A graph instance is built from a square lattice of $ n\times n $ nodes with Manhattan distance $ d $: if $ u = (i,j) $ and $ v = (k, l) $, then $ d(u,v) = |i-j|+|k-l| $.
$ d $ represents some natural proximity (geographic, social, \ldots) between nodes. Each node has some \emph{local} neighbors and $ q $ \emph{long range} neighbors. The local neighbors of a node $ u $ are the nodes $ v $ such that $ d(u,v)\leq p $. The $ q $ long range neighbors of $ u $, also called \emph{shortcuts}, are drawn independently and identically as follows: the probability that a given long edge starting from $ u $ arrives in $ v $ is proportional to $ (d(u,v))^{-r} $.

The problem of decentralized routing in a $ G(n,r,p,q) $ instance consists in delivering a message from node $ u $ to node $ v $ in a hop-by-hop basis. At each step, the message bearer needs to choose the next hop among its neighbors. The decision can only use the lattice coordinates of the neighbors and destination. The main example of decentralized algorithm is the \emph{greedy routing}, where at each step, the current node chooses the neighbor that is closest to destination based on $ d $ (in case of ties, an arbitrary breaking rule is used).

The main metric to analyze the performance of a decentralized algorithm is the \emph{expected delivery time}, which is the expected number of hops to transmit a message between two nodes chosen uniformly at random in the graph.

This paper focuses on studying the performance of the greedy algorithm. Unless stated otherwise, we assume $ p = q = 1 $ (each node has up to four local neighbors and one shortcut). Let $ e_r(n) $ be the expected delivery time of the greedy algorithm in $ G(n,r,1,1) $.

\subsection{Theoretical Results}
\label{sec:rw_theory}

The main theoretical results for the genuine model are provided in the original papers~\cite{Kleinberg00navigation,Kleinberg00thesmall-world}, where Kleinberg proves the following:
\begin{itemize}
	\item $ e_2(n) = O(\log^2(n)) $;
	\item for $ 0\leq r < 2 $, the expected delivery time of any decentralized algorithm is $ \Omega(n^{(2-r)/3}) $;
	\item for $ r > 2 $, the expected delivery time of any decentralized algorithm is $ \Omega(n^{(r-2)/(r-1)}) $.
\end{itemize}
Kleinberg's results are often interpreted as follows: short paths are easy to find only in the case $ r = 2 $. The fact that only one value of $ r $ asymptotically works is sometimes seen as the sign that Kleinberg's model is not robust enough to explain the small-world routing proposed by Karinthy and experimented by Milgram. However, as briefly discussed by Kleinberg in \cite{kleinberg10networks}, there is in fact some margin if one considers a grid of given $ n $. This tolerance will be investigated in more details in Section \ref{sec:efficient-enough-exponents}.

While we focus here on the original model, let us give a brief, non-exhaustive, overview of the subsequent extensions that have been proposed since. Most proposals refine the model by considering other graph models or other decentralized routing algorithms. 
New graph models are for example variants of the original model (studying grid dimension or the number of shortcuts per node \cite{fraigniaud01efficient,kleinberg10networks,fraigniaud14greedy}), graphs inspired by peer-to-peer overlay networks \cite{mankun04eighbor}, or arbitrary graphs augmented with shortcuts~\cite{fraigniaud09universal,stoc2010_searchability}.
Other proposals of routing algorithms usually try to enhance the performance of the greedy one by granting the current node additional knowledge of the topology \cite{fraigniaud2006eclectism,mankun04eighbor,martel04analyzing}.

A large part of the work above aims at improving the $ O(\log^2(n)) $ bound of the greedy routing. For example, in the small-world percolation model, a variant of Kleinberg's grid with $ O(\log(n)) $ shortcuts per node, greedy routing performs in $ O(\log(n)) $ \cite{mankun04eighbor}.

\subsection{Experimental Studies}
\label{sec:rw_xp}

Many empirical studies have been made to study how routing works in real-life social networks and the possible relation with Kleinberg's model (see for example \cite{Liben-Nowell16082005,kleinberg10networks} and the references within). On the other hand, numerical evaluations of the theoretical models are more limited to the best of our knowledge. Such evaluations are usually performed by averaging $ R $ runs of the routing algorithm considered.

In \cite{Kleinberg00navigation}, Kleinberg computes $ e_r(n) $ for $ n = 20,000 $ and $ r \in [0, 2.5] $, using 1,000 runs per estimate. However, he uses a torus instead of a regular grid (this will be discussed later in the paper). In \cite{mankun04eighbor}, networks of size up to $ 2^{24} $ (corresponding to $ n = 2^{12} $ in the grid) are investigated using 150 runs per estimate.

Closer to our work, Athanassopoulos \emph{et al.} propose a study centered on numerical evaluation that looks on Kleinberg's model and some variants \cite{athan10high}. For the former, they differ from the original model by having fixed source and destination nodes. They compute $ e_r(n) $ for values of $ n $ up to 3,000 and $ r\in \{0, 1, 2, 3\} $, using 900 runs per estimate.

To compare with, in the present paper, we consider values of $ n $ up to $ 2^{24} \approx 16,000,000 $ and $ r\in [0, 3] $, with at least 10,000 runs per estimate. To explain such a gain, we first need to introduce the issue of shortcuts computation.

\subsubsection{Drawing shortcuts}

As stated in \cite{athan10high}, the main computational bottleneck for simulating Kleinberg's model comes from the shortcuts.
\begin{itemize}
	\item There are $ n^2 $ shortcuts in the grid (assuming $ q = 1 $);
	\item When one wants to a shortcut, any of the $ n^2-1 $ other nodes can be chosen with non-null probability. This can be made by inverse transform sampling, with a cost $ \Omega(n^2) $;
	\item The shortcut distribution depends on the node $ u $ considered, even if one uses relative coordinates. For example, a corner node will have $ i+1 $ neighbors at distance $ i $ for $ 1\leq i < n $, against $ 4i $ neighbors for inner nodes (as long as the ball of radius $ i $ stays inside the grid). This means that, up to symmetry, each node has a unique shortcut distribution\footnote{To take advantage of symmetry, one can consider the isometric group of a square grid, which can be built with the quarter-turn and flip operations. However, its size is 8, so even using symmetry, there are at least $ \frac{n^2}{8} $ distinct (non-isomorphic) distributions.}. This prevents from mutualising shortcuts drawings between nodes.
\end{itemize}

In the end, building shortcuts as described above for each of the $ R $ runs has a time complexity $ \Omega(Rn^4) $, which is unacceptable if one wants to evaluate $ e_r(n) $ on large grids.

The first issue is easy to address: as observed in \cite{Kleinberg00thesmall-world,mankun04eighbor,athan10high}, we can use the \emph{Principle of deferred decision}~\cite{Mitzenmacher:2005:PCR:1076315} and compute the shortcuts on-the-fly as the path is built, because they are drawn independently and a node is never used twice in a given path. This reduces the complexity to $ \Omega(Rn^2 e_r(n)) $.

\subsubsection{Torus approximation}\label{sec:torus-approximation}

To lower the complexity even more, one can approximate the grid by the torus. This is the approach adopted in \cite{Kleinberg00navigation,mankun04eighbor}. The toroidal topology brings two major features compared to a flat grid:
\begin{itemize}
	\item The distribution of the relative position of the shortcut does not depend on the originating node. This enables to draw shortcuts in advance (in bulk);
	\item There is a strong radial symmetry, allowing to draw a ``radius'' and an ``angle'' separately.
\end{itemize}

To illustrate the gain of using a torus instead of a grid, consider the drawing of $ k $ shortcuts from $ k $ distinct nodes. In a grid, if one uses inverse transform sampling for each corresponding distribution, the cost is $ \Omega(n^2 k) $. In the torus, one can compute the probabilities to be at distance $ i $ for $ i $ between $ 1 $ and $ n $ (the maximal distance in the torus), draw $ k $ radii, then choose for each drawn radius $ i $ a node uniformly chosen among those at distance $ i $. Assuming drawing a float uniformly distributed over $ [0,1) $ can be made in $ O(1) $, the main bottleneck is the drawing of radii. Using bulk inverse transform sampling, it can be performed in $ O(n+k\log(k)) $, by sorting $ k $ random floats, matching then against the cumulative distribution of radii and reverse sorting the result.

\section{Fast Estimation of Expected Delivery Time}\label{sec:simulation-design}

We now describe our approach for computing $ e_r(n) $ in the flat grid with the same complexity than for the torus approximation.

\subsection{Dynamic rejection sampling for drawing shortcuts}\label{sec:dynamic-rejection-sampling-for-drawing-shortcuts}

In order to keep low computational complexity without making any approximation of the model, we propose to draw a shortcut of a node $ u $ as follows:
\begin{enumerate}
	\item We embed the actual grid $ G $ (we use here $ G $ to refer to the lattice nodes of $ G(n,r,p,q) $) in a virtual lattice $ B_u $ made of points inside a ball of radius $ 2(n-1) $. Note that the radius chosen ensures that $ G $ is included in $ B_u $ no matter the location of $ u $.
	\item We draw a node inside $ B_u $ such that the probability to pick up a node $ v $ is proportional to $ (d(u,v))^{-r} $. This can be done in two steps (radius and angle):
	\begin{itemize}
		\item For the radius, we notice that the probability to draw a node at distance $ i $ is proportional to $ i^{1-r} $, so we pick an integer between $ 1 $ and $ 2(n-1) $ such that the probability to draw $ i $ is to $ i^{1-r}/\sum_{k=1}^{2(n-2)}k^{1-r} $.
		\item For the angle, pick an integer uniformly chosen between 1 and $ 4i $
	\end{itemize}
	\item This determines a unique point $ v $ among the $ 4i $ points at distance $ i $ from $ u $ in the virtual lattice, chosen with a probability proportional to $ (d(u,v))^{-r} $. If $ v $ belongs to the actual grid, it becomes the shortcut, otherwise we try again (back to step \#2).
\end{enumerate}

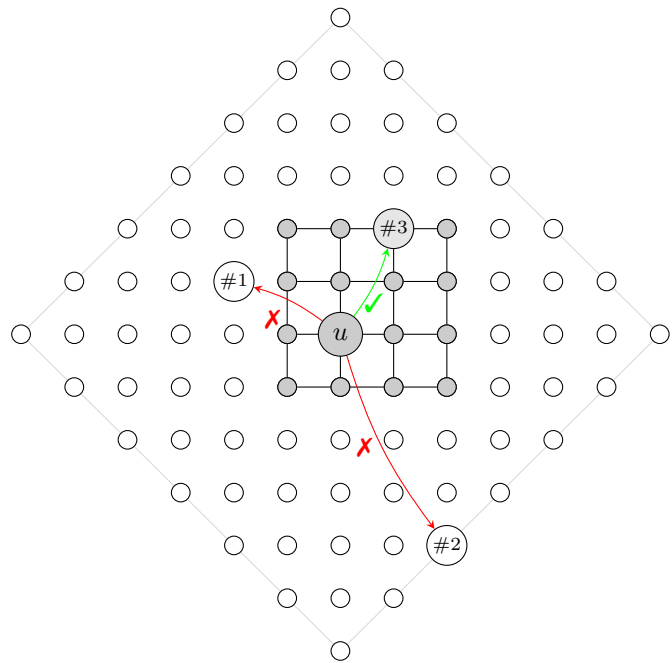
\begin{figure}
\centering	
\begin{tikzpicture}[gridstyle/.style={circle,draw,inner sep = 0, fill=gray!40,minimum size=7},
ballstyle/.style={circle, draw,inner sep = 0, minimum size = 7, fill = white},
shortstyle/.style={circle,draw, font = \scriptsize, inner sep=0pt, minimum size = 15}]
\draw[color = gray!30] (-4.2,0) -- (0,4.2) -- (4.2,0) -- (0,-4.2) -- (-4.2,0);
\foreach \r in {1,...,6}
\foreach \i in {1,...,\r}
{\node[ballstyle] (q1\r\i) at (.7*\r - .7*\i , .7*\i) {};
	\node[ballstyle] (q2\r\i) at ( - .7*\i ,.7*\r - .7*\i) {};
	\node[ballstyle] (q3\r\i) at (.7*\i - .7*\r , -.7*\i) {};
	\node[ballstyle] (q4\r\i) at (.7*\i , .7*\i - .7*\r) {};}

\foreach \x in {0,...,3}
\foreach \y in {0,...,3} 
\node [gridstyle]  (\x\y) at (.7* \x - .7,.7* \y - .7) {};

\foreach \x in {0,...,3}
\foreach \y [count=\yi] in {0,...,2}  
\draw (\x\y)--(\x\yi) (\y\x)--(\yi\x) ;

\node[circle,draw,fill=gray!40,minimum size=15] (u) at (11) {$ u $};

\node[shortstyle, fill=white] (t1) at (q232) {\#1};
\draw[->,>=stealth] (u) edge[bend angle = 10, bend right, red] node[red, below left] {\xmark} (t1);

\node[shortstyle, fill=white] (t2) at (q462) {\#2};
\draw[->,>=stealth] (u) edge[bend angle = 10, bend right, red] node[red, left] {\xmark} (t2);

\node[shortstyle, fill=gray!20] (t3) at (23) {\#3};
\draw[->,>=stealth] (u) edge[bend angle = 10, bend right, green] node[green, below] {\cmark} (t3);
\end{tikzpicture}
\caption{Main idea of the dynamic rejection sampling approach ($ n = 4 $).}
\label{fig:toyrejection}
\end{figure}

This technique, illustrated in Figure \ref{fig:toyrejection}, is inspired by the \emph{rejection sampling} method \cite{rejection}. By construction, it gives the correct distribution: the node $ v $ that it eventually returns is in the actual grid and has been drawn with a probability proportional to $ (d(u,v))^{-r} $.

We call this \emph{dynamic} rejection sampling because the sampled distribution changes with the current node $ u $. Considering $ u $ as a relative center, the actual grid $ G $ moves with $ u $ and acts like an acceptance mask.  On the other hand, the distribution over the virtual lattice $ B_u $ remains constant. This enables to draw batches of relative shortcuts that can be used over multiple runs, exactly like for the torus approximation.

The only possible drawback of this approach is the number of attempts required to draw a correct shortcut. Luckily, this number is contained.

\begin{lemma}
	\label{lem}
The probability that a node drawn in $ B_u $ belongs to $ G $ is at least $ \frac{1}{8} $.
\end{lemma}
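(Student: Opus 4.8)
The plan is to compute the acceptance probability exactly as a ratio of two distance-weighted sums and then show this ratio never drops below $\frac18$. Writing $M_i$ for the number of actual grid nodes at distance exactly $i$ from $u$, and recalling that $B_u$ contains exactly $4i$ nodes at each distance $1\le i\le 2(n-1)$, the two-step draw returns a node at distance $i$ with probability proportional to $i^{1-r}$ and, conditioned on that radius, lands in $G$ with probability $M_i/(4i)$. Hence the acceptance probability is
\[
P \;=\; \frac{\sum_{i\ge1} M_i\, i^{-r}}{\sum_{i=1}^{2(n-1)} 4i\, i^{-r}},
\]
and the lemma is equivalent to $\sum_i c_i\, i^{-r}\ge 0$, where $c_i := M_i - \tfrac{i}{2}$.

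I would derive this inequality from three ingredients combined by an Abel-summation (Chebyshev-type) argument: (i) the total $\sum_i c_i \ge 0$; (ii) the sequence $(c_i)$ changes sign at most once, from nonnegative to nonpositive; and (iii) the weights $w_i = i^{-r}$ are positive and non-increasing in $i$ because $r\ge 0$. Granting these, let $i^\star$ be the sign-change index. For every $i$ the numbers $c_i$ and $w_i-w_{i^\star}$ share the same sign (both $\ge0$ when $i\le i^\star$, both $\le0$ when $i>i^\star$), so $c_i w_i \ge c_i w_{i^\star}$ termwise and therefore $\sum_i c_i w_i \ge w_{i^\star}\sum_i c_i \ge 0$. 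Ingredient (i) is a direct count that is insensitive to the position of $u$: $\sum_{i\ge1} M_i = n^2-1$ (every grid node but $u$), while $\tfrac12\sum_{i=1}^{2(n-1)} i = \tfrac{(n-1)(2n-1)}{2}$, and their difference equals $\tfrac{3(n-1)}{2}\ge 0$.

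The crux, and the step I expect to be the main obstacle, is the single sign change (ii), since $M_i$ itself is in general not concave. I would control it through the quadrant geometry of $M_i$. After reflecting coordinates so that the rightward and upward extents of $u$ dominate (WLOG $A\ge a$ and $B\ge b$, where $a+A=b+B=n-1$ are the left/right and down/up extents), set $m=\max(A,B)$; then $m\le n-1$ and $\min(A,B)\ge\tfrac{n-1}{2}$. For $i\le m$ the closed north-east quadrant already lies in $G$ and alone contributes $\min(i+1,\ \min(A,B)+1)$ nodes at distance $i$, which exceeds $i/2$ (using $\min(A,B)\ge\tfrac{n-1}{2}\ge m/2$), so $c_i>0$ on $[1,m]$. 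For $i>m$ there are no nodes on the two axes through $u$, so the distance-$i$ nodes split into four open-quadrant counts, each of the decreasing form $(\text{sum of extents}-i+1)^+$; hence $M_i$ is non-increasing there while $i/2$ strictly increases, making $c_i$ strictly decreasing on $(m,\infty)$. Thus $(c_i)$ is positive up to $m$ and strictly decreasing afterwards, so it crosses zero at most once, which establishes (ii) and, via the summation argument above, yields $P\ge\frac18$.
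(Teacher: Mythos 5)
Your proof is correct, and it takes a genuinely different route from the paper's. The paper argues purely geometrically, by comparing sums over node sets through distance-non-increasing maps: it first reduces to the worst case where $u$ sits at a corner (a bijection $f$ from $G$ to a corner-anchored copy $G_c$ with $d(u,f(v))\ge d(u,v)$), then covers the $(2n-1)\times(2n-1)$ square $H_u$ centered at $u$ by the four possible copies of $G_c$ (a factor $4$), and finally folds the outer region $B_u\setminus H_u$ into $H_u$ (a factor $2$), so that the denominator is less than $8$ times the numerator. You instead decompose radially: writing the acceptance probability as $\sum_i M_i\,i^{-r}\big/\sum_i 4i\,i^{-r}$, you reduce the lemma to $\sum_i c_i\,i^{-r}\ge 0$ with $c_i=M_i-\tfrac{i}{2}$, and deduce this from (i) the exact count $\sum_i c_i=\tfrac{3(n-1)}{2}\ge 0$, (ii) a single sign change of $(c_i)$, and (iii) a Chebyshev-type rearrangement exploiting that $i^{-r}$ is non-increasing for $r\ge 0$. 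I checked your two delicate steps and both hold: the closed-quadrant count $\min(i+1,\min(A,B)+1)>i/2$ for $i\le\max(A,B)$ is valid because the reflection normalization gives $\min(A,B)\ge\tfrac{n-1}{2}\ge\max(A,B)/2$, and beyond $\max(A,B)$ the axes are empty and each open-quadrant count has the form $(\text{extent sum}-i+1)^+$, which is non-increasing, so $c_i$ is strictly decreasing there. As for what each approach buys: the paper's folding argument is shorter, more visual, and transfers directly to dimension $\beta$ (its remark deriving the $\beta!\,(2\beta)^{-\beta}$ bound relies on exactly that); your argument is heavier at the crux but more informative about the constant, since it exhibits $\tfrac18$ as coming from the $r=0$ counting identity plus the observation that the deficit $c_i<0$ occurs only at large radii, where the weights $i^{-r}$ are smallest --- which cleanly explains why the bound is asymptotically tight precisely at $r=0$, a fact the paper only notes separately in a remark. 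Your argument also works verbatim for any non-increasing weight profile in place of $i^{-r}$, not just power laws.
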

\begin{proof}
We will prove that 
$$\frac{\sum_{v\in G\setminus \{u\}}(d(u,v))^{-r}}{\sum_{v\in B_u\setminus \{u\}}(d(u,v))^{-r}}>\frac{1}{8}\text{.}$$

We use the fact that the probability decreases with the distance combined with some geometric arguments. Let $ G_c $ be a $ n\times n $ lattice that has $ u $ as one of its corner. Let $ H_u $ the $ (2n-1)\times (2n-1) $ lattice centered in $ u $.

In terms of probability of drawing a node in $ G\setminus \{u\} $, the worst case is when $ u $ is at some corner: there is a bijection $ f $ from $ G $ to $ G_c $ such that for all $ v\in G\setminus\{u\}, d(u, f(v))\geq d(u,v) $. Such a bijection can be obtained by splitting $ G $ into $ G\cap G_c $ and three other sub-lattices that are flipped over $ G\cap G_c $ (see Figure \ref{fig:eight}). This gives 
$$\sum_{v\in G\setminus \{u\}}(d(u,v))^{-r}\geq \sum_{v\in G\setminus \{u\}}(d(u,f(v)))^{-r} = \sum_{v\in G_c\setminus \{u\}}(d(u,v))^{-r}\text{.}$$

Then we observe that the four possible lattices $ G_c $ obtained depending on the corner occupied by $ u $ fully cover $ H_u $. In fact, axis nodes are covered redundantly. This gives
$$\sum_{v\in H_u\setminus \{u\}}(d(u,v))^{-r}< 4 \sum_{v\in G_c\setminus \{u\}}(d(u,v))^{-r}\text{.}$$

Lastly, if one folds $ B_u\setminus H_u $ back into $ H_u $ like the corners of a sheet of paper, we get a strict injection $ g $ from $ B_u\setminus H_u $ to $ H_u \setminus \{u\} $ (the diagonal nodes of $ H_u $ are not covered). Moreover, for all $ v\in B_u\setminus H_u, d(u, v)\geq d(u,h(v)) $. This gives
$$\sum_{v\in B_u\setminus H_u}(d(u,v))^{-r} \leq \sum_{v\in B_u\setminus H_u}(d(u,h(v)))^{-r} < \sum_{v\in H_u\setminus \{u\}}(d(u,v))^{-r} \text{.}$$

This concludes the proof, as we get 
$$\frac{\sum_{v\in G\setminus \{u\}}(d(u,v))^{-r}}{\sum_{v\in B_u\setminus \{u\}}(d(u,v))^{-r}}\geq 
\frac{\sum_{v\in G_c\setminus \{u\}}(d(u,v))^{-r}}{\sum_{v\in H_u\setminus \{u\}}(d(u,v))^{-r}+\sum_{v\in B_u\setminus H_u}(d(u,v))^{-r}} >\frac{1}{8}\text{.}$$

\begin{figure}
	\centering	
	\resizebox{\textwidth}{!}{%
\begin{tikzpicture}[gridstyle/.style={circle,draw,inner sep = 0, fill=gray!20,minimum size=7},
ballstyle/.style={circle, draw,inner sep = 0, minimum size = 7, fill = white},
]

\def\ball#1#2{
	\draw[color = gray!30] (#1 - 4.2, #2) -- (#1 + 0, #2 + 4.2) -- (#1 + 4.2, #2) -- (#1 + 0, #2 - 4.2) -- (#1 - 4.2, #2);
	\foreach \r in {1,...,6}
	\foreach \i in {1,...,\r}
	{
		\node[ballstyle] (q1\r\i) at (.7*\r-.7*\i+#1, .7*\i+ #2) {};
		\node[ballstyle] (q2\r\i) at (-.7*\i+#1, .7*\r-.7*\i+ #2) {};
		\node[ballstyle] (q3\r\i) at (.7*\i-.7*\r+#1, -.7*\i+ #2) {};
		\node[ballstyle] (q4\r\i) at (.7*\i+#1, .7*\i-.7*\r+ #2) {};}	
}

\def\grid#1#2#3#4#5{
	\foreach \x in {1,...,#4}
	\foreach \y in {1,...,#5}
	\node[gridstyle, #1] (\x\y) at (.7*\x-.7+#2, .7*\y-.7+#3) {};
}

\ball{0}{0}
\grid{diamond,minimum size=15, fill=black!66}{- .7}{-.7}{2}{1}
\grid{minimum size=15, fill=black!33}{.7}{-.7}{2}{1}
\grid{regular polygon,regular polygon sides=6,minimum size=15, fill=white}{- .7}{0}{2}{3}
\grid{star,star points=8,minimum size=15, fill=black}{.7}{0}{2}{3}
\node[circle,draw,fill=white!40,regular polygon,regular polygon sides=6,minimum size=15] (u) at (0,0) {$ u $};
\node (G) at (-2.8,-2.8) {\LARGE $ G $};

\ball{10}{0}
\grid{diamond,minimum size=15, fill=black!66}{10 - .7}{2.1}{2}{1}
\grid{minimum size=15, fill=black!33}{10 -2.1}{2.1}{2}{1}
\grid{regular polygon,regular polygon sides=6,minimum size=15, fill=white}{10 - .7}{0}{2}{3}
\grid{star,star points=8,minimum size=15, fill=black}{10 -2.1}{0}{2}{3}
\node[circle,draw,fill=white!40,regular polygon,regular polygon sides=6,minimum size=15] (u) at (10,0) {$ u $};
\node (G) at (7.2,-2.8) {\LARGE $ G_c $};
\node (G) at (17.2,-2.8) {\LARGE $ H_u $};

\ball{20}{0}
\grid{minimum size = 15}{20-2.1}{-2.1}{7}{7}
\node[circle,draw,fill=white!40,minimum size=15] (u) at (20,0) {$ u $};

\end{tikzpicture}
}		
	\caption{Graphical representation of $ G $, $ G_c $ and $ H_u $ ($ n = 4 $). The sub-lattices used to built a bijection between $ G $ and $ G_c $ (cf proof of Lemma \ref{lem}) are represented with distinct shapes and gray levels.}
	\label{fig:eight}
\end{figure}
\end{proof}

\paragraph*{Remarks}
\begin{itemize}
	\item When $ r = 0 $ (uniform shortcut distribution), the bound $ \frac{1}{8} $ is asymptotically tight: the success probability is exactly the ratio between the number of nodes in $ G\setminus\{u\} $ and $ B_u\setminus\{u\} $, which is $ \frac{n^2-1}{4(n-1)(2n-1)} \underset{n\to +\infty}{\longrightarrow} \frac{1}{8} $. On the other hand, as $ r $ grows, the probability mass gets more and more concentrated around $ u $ (hence in $ G $), so we should expect better performance (cf Section \ref{sec:performance-analysis}).
	\item The dynamic rejection sampling approach can be used in other variants of Kleinberg's model, like for other dimensions or when the number of shortcuts per node is a random variable (like in \cite{fraigniaud14greedy}). The only requirement is the existence of some \emph{root} distribution (like the distribution over $ B_u $ here) that can be carved to match any of the possible distributions with a simple acceptance test.
	\item Only the nodes from $ H_u $ may belong to $ G $, so nodes from $ B_u\setminus G $ are always sampled for nothing. For example, in Figure \ref{fig:toyrejection}, this represents 36 nodes over 84. By drawing shortcuts in $ H_u\setminus\{u\} $ instead of $ B_u\setminus\{u\} $, we could increase the success rate lower bound to $ 1/4 $. However, this would make the algorithm implementation more complex (the number of nodes at distance $ i $ is not always $ 4i $), which is in practice not worth the factor 2 improvement of the lower bound. This may not be true for a higher dimension $ \beta $. Adapting the proof from Lemma \ref{lem}, we observe that using a ball of radius $ \beta(n-1) $ will lead to a bound $ \beta ! (2\beta)^{-\beta} $, while a grid of side $ 2n - 1 $ will lead to $2^{-\beta}$. The two bounds are asymptotically tight for $ r = 0 $. In that case the grid approach is $ \frac{\beta ^ \beta}{\beta !} $ more efficient than the ball approach (this  grows exponentially with $ \beta $).
\end{itemize}

\subsection{Performance Evaluation}\label{sec:performance-analysis}

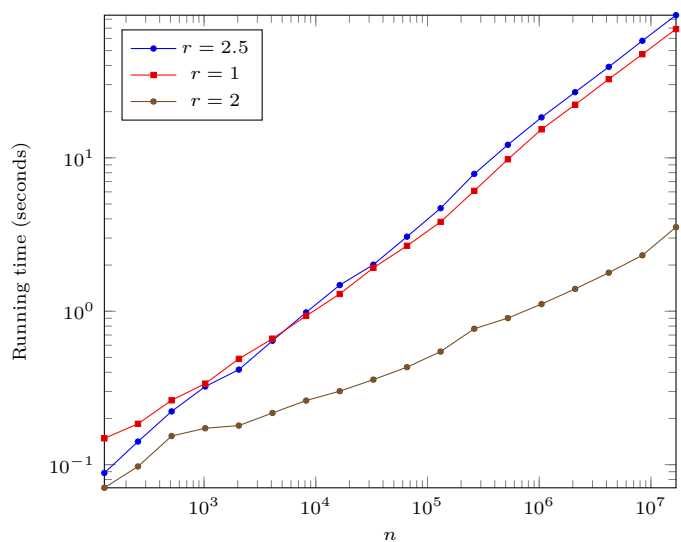
\begin{figure}
\centering

\pgfplotstableread{edt_perf_r_1_with_10000_runs.txt}\ffu
\pgfplotstableread{edt_perf_r_2_with_10000_runs.txt}\ffd
\pgfplotstableread{edt_perf_r_2.5_with_10000_runs.txt}\fft

\begin{tikzpicture}
\begin{loglogaxis}[
tikzDefaults,
width=0.65\textwidth,
xlabel=$ n $,
ylabel= {Running time (seconds)},
legend pos=north west,	
]

\addplot table [x=n, y = time]{\fft};
\addlegendentry{$ r = 2.5 $}

\addplot table [x=n, y = time]{\ffu};
\addlegendentry{$ r = 1 $}

\addplot table [x=n, y = time]{\ffd};
\addlegendentry{$ r = 2 $}

\end{loglogaxis}
\end{tikzpicture}
\caption{Computing $ e_r(n) $ using $ R = 10,000 $ runs}
\label{fig:computation}
\end{figure}

We implemented our algorithm in Julia 0.4.5. A working code example is provided in Appendix \ref{sec:code}. Simulations were executed on a low end device (a Dell tablet with 4 Gb RAM and Intel M-5Y10 processor), which was largely sufficient for getting fast and accurate results on very large grids, thanks to the dynamic rejection sampling.

Unless said otherwise, $ e_r(n) $ is obtained by averaging $ R = 10,000 $ runs. For $ n $, we mainly consider powers of two ranging from $ 2^7 $ (about 16,000 nodes) to $ 2^{24} $ (about 280 trillions nodes). We focus on $ r\in [0, 3] $. 

Regarding the choice of the bulk size $ k $ (we assume here the use of bulk inverse transform sampling), the average number of shortcuts to draw over the $ R $ runs is $ Re_r(n) $. This leads to an average total cost for drawing shortcuts in $ O(\lceil\frac{Re_r(n)}{k} \rceil(n+k\log(n))) $. $ k $ can be optimized if $ e_r(n) $ is known, but this requires bootstrapping the estimation of $ e_r(n) $. Yet, we can remark that for $ n $ fixed and $ R $ large enough, we should choose $ k $ of the same order of magnitude than $ n $, which ensures an average cost per shortcut in $ O(\log(n)) $. This is the choice we made in our code, which gives a complexity in $ O(\max(Re_r(n), n)\log(n)) $. This is not efficient for $ n\gg R e_r(n) $ (the bulk is then over-sized, so lot of unused shortcuts are drawn), but this seldom happens with our settings.

Figure \ref{fig:computation} presents the time needed to compute $ e_r(n) $ as a function of $ n $ for $ r \in \{1, 2, \frac 5 2 \}$.
We observe running times ranging from seconds to a few minutes. To compare with, in \cite{athan10high}, which is to the best of our knowledge the only work disclosing computation times, one single run takes about 4 seconds for $ n = 400 $. With a similar time budget, our implementation averages 10,000 runs for $ n=2^{17}\approx 130,000 $ ($ r = 1 $ or $ r = 2.5 $), or up to $ n = 2^{24} \approx 16,000,000 $ for the optimal exponent $ r = 2 $. In other words, we are several orders of magnitude faster.

\begin{figure}
\centering
\pgfplotstableread{delivery_overhead_n_16384_with_10000_runs.txt}\fn
\begin{tikzpicture}
\begin{axis}[
tikzDefaults,
width=0.65\textwidth,
xlabel=$ r $,
ylabel= {Success rate},
ymax = 1,
ymin = 0,
xmin = 0,
]
\addplot table [x=r, y expr = 1 / \thisrow{overhead}]{\fn};
\end{axis}
\end{tikzpicture}
\caption{Frequency of shortcuts drawn in $ B_u $ that belongs to $ G $ observed during a computation of $ e_r(n) $ ($ n = 2^{14} $)}
\label{fig:rejection}
\end{figure}
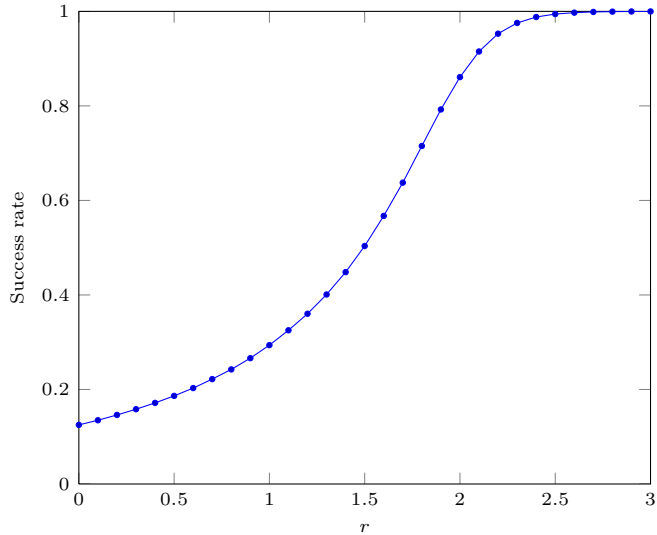

We also looked at the cost of the dynamic rejection sampling approach in terms of shortcuts drawn outside of $ G $. Figure \ref{fig:rejection} shows the success frequency of the sampling as a function of $ r $ for $ n = 2^{14} $ (the actual value of $ n $ has no significant impact as long as it is large enough). We verify Lemma \ref{lem}: the success rate is always at least $ 1/8 = 0.125 $, which is a tight bound for $ r = 0 $. For $ r = 1 $, the rate is about $ 0.29 $, and it climbs to $ 0.86 $ for $ r = 2 $. Failed shortcuts become negligible (rate greater than $ 0.99 $) for $ r\geq 2.5 $. Overall, Figure \ref{fig:rejection} shows that the cost of drawing some shortcuts outside $ G $ is a small compared to the benefits offered by drawing shortcuts from a unique distribution. 

\paragraph*{Remark} In terms of success rate, Figure \ref{fig:rejection} shows that $ r = 2.5 $ is much more efficient than $ r = 1 $, but running times tend to be slightly longer for $ r = 2.5 $  (cf Figure \ref{fig:computation}). The reason is that $ e_1(n) $ is lower than $ e_{2.5}(n) $, which overcompensates the success rate difference.

\section{Applications}\label{sec:applications}

Given the tremendous amount of strong theoretical results on small-world routing, one can question the interest of proposing a simulator (even a fast one!).

In this Section, we prove the interest of numerical evaluation through three (almost) independent small studies.

\subsection{Efficient enough exponents}\label{sec:efficient-enough-exponents}

In \cite{Kleinberg00navigation}, Kleinberg gives an estimation of $ e_r(20,000) $ using a torus approximation (cf Section \ref{sec:torus-approximation}). A few years later, he discussed in more details the results, observing that~\cite{kleinberg10networks}:
\begin{itemize}
	\item $ e_r(n) $ stays quite similar for $ r\in [1.5, 2]$;
	\item the best value of $ r $ is actually slightly lower than $ 2 $.
\end{itemize}

We believe that these observations are very important as they show that routing in Kleinberg's grid is more robust than predicted by theory: it is efficient as long as $ r $ is \emph{close enough} to 2.

Using our simulator, we can perform the same experiment. As shown by Figure \ref{fig:n20000}, the results are quite similar to the ones observed in \cite{Kleinberg00navigation}.

\begin{figure}
	\centering
	\pgfplotstableread{delivery_overhead_n_20000_with_10000_runs.txt}\fn
	\begin{tikzpicture}
	\begin{semilogyaxis}[
	tikzDefaults,
	width=0.65\textwidth,
	xlabel=$ r $,
	ylabel= {$ e_r(20,000) $},
	ytick = {10, 100, 1000},
	ymin = 100
	]
	\addplot table [x=r, y = delivery]{\fn};
	\end{semilogyaxis}
	\end{tikzpicture}
	\caption{Expected delivery time for $ n = 20,000 $}
	\label{fig:n20000}
\end{figure}
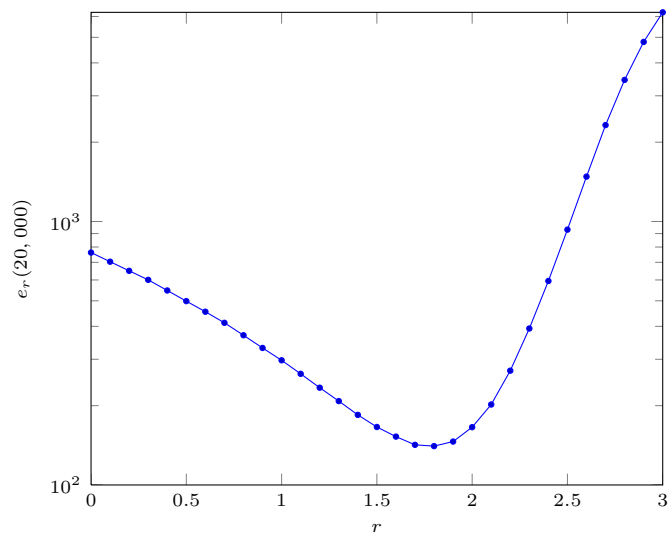

Yet, there is a small but \emph{essential} difference between the two experiments: Kleinberg approximated his grid by a torus while we stay true to the original model. Why do we use the word \emph{essential}? Both shapes have the same asymptotic behavior (proofs in \cite{Kleinberg00navigation} are straightforward to adapt), so why should we care? It seems to us that practical robustness is an essential feature if one wants to accept Kleinberg's grid as a reasonable model for routing in social networks. To the best of our knowledge, no theoretical work gives quantitative results on this robustness, so we need to rely on numerical evaluation. But when Kleinberg uses a torus approximation, we can not rule out that the observed robustness is a by-side effect of the toroidal topology. The observation of a similar phenomenon on a flat grid discards this hypothesis. In fact, it suggests (without proving) that the robustness with respect to the exponent for grids of finite size may be a general phenomenon.

We propose now to investigate this robustness in deeper details. We have evaluated the following values, which outline, for a given $ n $, the values of $ r $ that can be considered reasonable for performing greedy routing:
\begin{itemize}
	\item The value of $ r $ that minimizes $ e_r(n) $, denoted $ r_{\opt}$;
	\item The smallest value of $ r $ such that $ e_r(n) \leq e_2(n) $, denoted $ r_{\min}(e_2(n)) $;
	\item The smallest and largest values of $ r $ such that $ e_r(n) \leq 2 e_2(n) $, denoted $ r_{\min}(2e_2(n)) $ and $ r_{\max}(2e_2(n)) $ respectively.
\end{itemize}

The results are displayed in Figure \ref{fig:interval}. All values but $ r_{\opt} $ are computed by bisection. For $ r_{\opt} $, we use a Golden section search \cite{golden}. Finding a minimum requires more accuracy, so the search of $ r_{\opt} $ is set to use $R =  1,000,000 $ runs per estimation. Luckily, as the computation operates by design through near-optimal values, we can increase the accuracy with reasonable running times.

\begin{figure}
\centering
\pgfplotstableread{kleinberg_stats_with_10000_runs.txt}\fn
\pgfplotstableread{kleinberg_min_r_with_1000000_runs.txt}\frmin
\begin{tikzpicture}
\begin{semilogxaxis}[
tikzDefaults,
width=0.65\textwidth,
xlabel=$ n $,
ylabel= $ r $,
legend pos=south east,	
]
\addplot table [x=n, y = rsup2d2]{\fn};
\addlegendentry{$ r_{\max}(2e_2(n)) $}
\addplot table [x=n, y = rmin]{\frmin};
\addlegendentry{$ r_{\opt}(n) $}
\addplot table [x=n, y = rinfd2]{\fn};
\addlegendentry{$ r_{\min}(e_2(n)) $}
\addplot table [x=n, y = rinf2d2]{\fn};
\addlegendentry{$ r_{\min}(2e_2(n)) $}
\addplot+ [style=dashed, no markers] coordinates
{(128,2) 	(16777216,2)};
\end{semilogxaxis}
\end{tikzpicture}
\caption{Reasonable values of $ r $}
\label{fig:interval}
\end{figure}
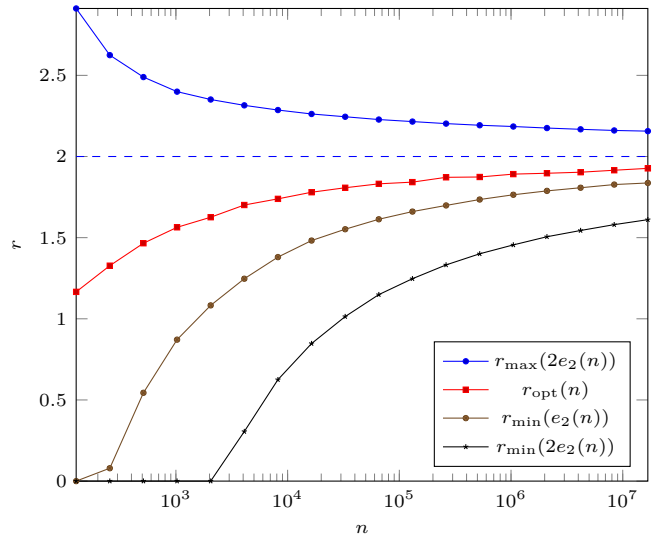

Besides confirming that $ r = 2 $ is asymptotically the optimal value, Figure \ref{fig:interval} shows that the range of reasonable values for finite grids is quite comfortable. For example, considering the range of values where $ e_r(n) $ is less than twice $ e_2(n) $, we observe that:
\begin{itemize}
	\item For $ n\leq 2^{11} $ (less than four million nodes), any $ r $ between 0 and 2.35 works;
	\item For $ n \leq 2^{14} $ (less than 270 million nodes), the range is between 0.85 and 2.26. 
	\item Even for $ n = 2^{24} $ (about 280 trillions nodes), all values of $ r $ between 1.58 and 2.16 can still be considered \emph{efficient enough}.
\end{itemize}

\subsection{Asymptotic Behavior}\label{sec:asymptotic-behavior}

\begin{figure*}[!ht]
\centering
\begin{minipage}{1.1\textwidth}
\pgfplotstableread{edt_perf_r_1_with_10000_runs.txt}\fn
\begin{tikzpicture}
\begin{loglogaxis}[
tikzDefaults,
width=0.34\textwidth,
xlabel=$ n $,
title = {$ r = 1 $},
legend pos=south east,	
]
\addplot table [x=n, y = delivery]{\fn};
\addlegendentry{$ e_1(n) $}
\addplot[domain = 128:16777216] {2*(x)^(1/2)};
\addlegendentry{$ 2\sqrt{n} $}
\end{loglogaxis}
\end{tikzpicture}
\pgfplotstableread{edt_perf_r_2_with_10000_runs.txt}\fn
\begin{tikzpicture}
\begin{loglogaxis}[
tikzDefaults,
width=0.34\textwidth,
xlabel=$ n $,
legend pos=south east,	
title = {$ r = \smash{2} $},
]
\addplot table [x=n, y = delivery]{\fn};
\addlegendentry{$ e_2(n) $}
\addplot[domain = 128:16777216] {2*(ln(x))^2-20};
\addlegendentry{$ \frac{\log^2(n)-10}{2} $}
\end{loglogaxis}
\end{tikzpicture}
\pgfplotstableread{edt_perf_r_2.5_with_10000_runs.txt}\fn
\begin{tikzpicture}
\begin{loglogaxis}[
tikzDefaults,
width=0.34\textwidth,
xlabel=$ n $,
legend pos=south east,	
title = {$ r = \smash{\frac{5}{2}} $},
]
\addplot table [x=n, y = delivery]{\fn};
\addlegendentry{$ e_{2.5}(n) $}
\addplot[domain = 128:16777216] {7*(x)^(0.5)};
\addlegendentry{$ 7 \sqrt{n}$}
\end{loglogaxis}
\end{tikzpicture}
\end{minipage}
\caption{Expected delivery time for different values of $ r $}
\label{fig:varyingr}
\end{figure*}
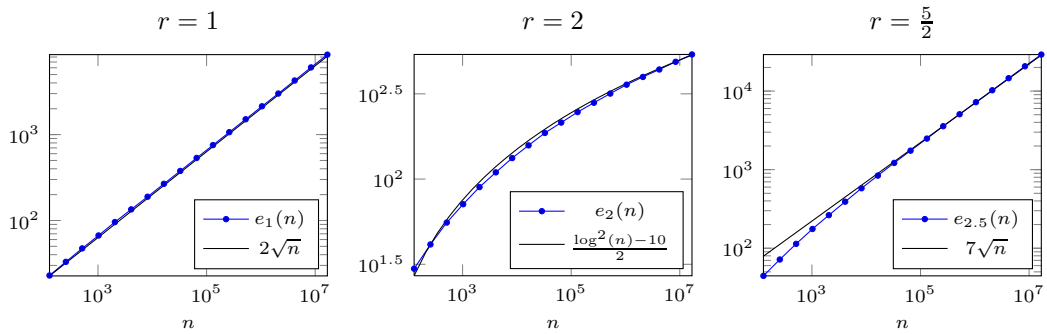

Our simulator can be used to verify the theoretical bounds proposed in  \cite{Kleinberg00thesmall-world}. For example, Figure \ref{fig:varyingr} shows $ e_r(n) $ for $ r $ equal to $ 1 $, 2, and $ \frac{5}{2} $.

As predicted, $ e_r(n) $ seems to behave like $ \log^2(n) $ for $ r = 2 $ and like $ n^\alpha $ for the two other cases. Yet, the exponents found differ from the ones proposed in \cite{Kleinberg00thesmall-world}. For both $ r = 1 $ and $ r = 2.5 $, we observe $ \alpha = \frac{1}{2} $, while the lower bound is $ \frac{1}{3} $. Intrigued by the difference, we want to compute $ \alpha $ as a function of $ r $. 

However, we see in Figure \ref{fig:varyingr} that a $ \log^2(n) $ curve appears to have a positive slope in a logarithmic scale, even for large values of $ n $. This may distort our estimations. To control the possible impact of this distortion, we estimate the exponent at two distinct scales:
\begin{itemize}
	\item  $ n\in [2^{15}, 2^{20}] $, using the estimation $\alpha_1 := \frac{\log_2(e_r(2^{20}))-\log_2(e_r(2^{15}))}{5} $;
	\item $ n\in [2^{20}, 2^{24}] $, using the estimation $ \alpha_2 := \frac{\log_2(e_r(2^{24}))-\log_2(e_r(2^{20}))}{4} $.
\end{itemize}
The Results are displayed in Figure \ref{fig:exponent}. The range of $ r $ was extended to $ [0,4] $ due to the observation of a new transition for $ r = 3 $

As feared, our estimations do not indicate 0 for $ r = 2 $, but the fact that $ \alpha_2 $ is closer to 0 than $ \alpha_1 $ confirms that this is likely caused by the $ \log^2(n) $ factor. Moreover, we observe that $ \alpha_1 $ and $ \alpha_2 $ only differ for $ r \approx 2 $ and $ r\approx 3 $, suggesting that both estimates should be accurate except for these critical values.

\begin{figure}
\centering
\pgfplotstableread{p_exponents_with_10000_runs.txt}\xp

\begin{tikzpicture}
\begin{axis}[
tikzDefaults,
width=0.65\textwidth,
xlabel=$ r $,
ylabel= $ \alpha $,
legend pos=north west,
legend style = {at={(.4,.98)},anchor=north},
ymax = 1,
ymin = 0,
xmin = 0,
clip = false	
]

\addplot table [x=r, y = expo1]{\xp};
\addlegendentry{$ \alpha_1 $}

\addplot table [x=r, y = expo2]{\xp};
\addlegendentry{$ \alpha_2 $}

\addplot[dotted, semithick, domain = 0:2] {(2-x)/(3-x)};
\addlegendentry{$ \frac{2-r}{3-r} $}
\addplot[dashed, domain = 2:3] {(x-2)};

\addlegendentry{$r-2$}
\addplot[domain = 0:2] {(2-x)/3};
\addplot[domain = 2:4] {(x-2)/(x-1)};

\addlegendentry{Lower bound (\cite{Kleinberg00thesmall-world})}
\end{axis}
\end{tikzpicture}
\caption{Estimates of the exponent $ \alpha $}
\label{fig:exponent}
\end{figure}
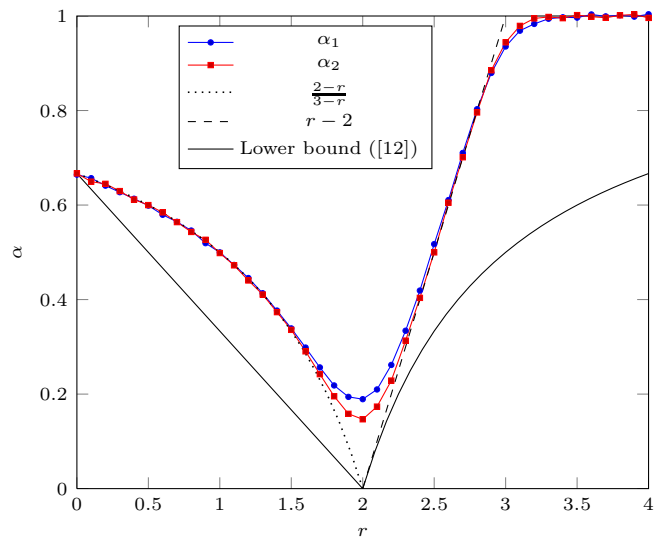

Based on Figures \ref{fig:varyingr} and \ref{fig:exponent}, it seems that the bounds proposed by Kleinberg in \cite{Kleinberg00thesmall-world} are only tight for $ r = 0 $ and $ r = 2 $. Formally proving more accurate bounds is beyond the scope and spirit of this paper, but we can conjecture new bounds hinted by our simulations:
\begin{itemize}
	\item For $ 0\leq r < 2 $, we have $ e_r(n) = \Theta(n^\frac{2-r}{3-r})$;
	\item For $ 2< r < 3 $, we have $ e_r(n) = \Theta(n^{r-2})$;
 	\item For $ r > 3 $, we have $ e_r(n) = \Theta(n)$.
\end{itemize}

This conjectured bounds are consistent with the one proposed in \cite{fraigniaud01efficient}, where it is proved that for the 1-dimensional ring, we have:
\begin{itemize}
	\item For $ 0\leq r < 1 $, $ e_r(n) = \Omega(n^\frac{1-r}{2-r})$;
	\item For $ r = 1 $, $ e_r(n) = \Theta(\log^2(n)) $
	\item For $ 1< r < 2 $, $ e_r(n) = O(n^{r-1})$;
	\item For $ r = 2 $, $ e_r(n) = O(n\frac{\log(\log(n))}{\log(n)}) $
	\item For $ r > 2 $, $ e_r(n) = O(n)$.
\end{itemize}

It is likely that the proofs in \cite{fraigniaud01efficient} can be adapted to the 2-dimensional grid, but some additional work may be necessary to demonstrate the bounds' tightness. Moreover, our estimates may have missed some slower-than-polynomial variations. For example, the logarithms in the bounds for $ r=2 $ in \cite{fraigniaud01efficient} may have a counterpart in the grid for $ r=3 $. This would explain why the estimates are not sharp around that critical value (like for the case $ r=2 $ and the term in $ \log^2(n) $).

\paragraph*{Remark} For $ r\geq 3.5 $, we have observed that $ e_r(n)\approx \frac{2}{3}n $, which is the expected delivery path in absence of shortcuts: for these high values of $ r $, almost all shortcuts link to an immediate neighbor of the current node, which make them useless, and the rare exceptions are so short that they make no noticeable difference.

\subsection{Six degrees of separation}\label{sec:six-degrees-of-separation}

What makes Milgram's experiments \cite{milgram67small,milgram69experimental} so famous is the surprising observation that only a few hops are required to transmit messages in social networks, a phenomenon called \emph{six degrees of separation} in popular culture.

Yet, the values of $ e_r(n) $ observed until now are quite far for the magic number six. For example, in Figure \ref{fig:n20000}, the lowest value is 140. In addition to the asymptotic lack of robustness with respect to the exponent (discussed in Section \ref{sec:efficient-enough-exponents}), this may partially motivate the amount of work accomplished to increase the realism of the model and the performance of the routing algorithm.

In our opinion, a good model should be as simple as possible while being able to accurately predict the phenomenon that needs to be explained.
We propose here to answer a simple question: is the genuine model of greedy routing in $ G(n,r,p,q) $ a good model? Sure, it is quite simple. To discuss its accuracy, we need to tune the four parameters $ n,r,p $ and $ q $ to fit the conditions of Milgram's experiments as honestly as we can.
\begin{description}
	\item[Size] The experiments of Milgram were conducted in the United States, with a population of about 200,000,000 at the late sixties. All inhabitants were not susceptible to participate to the experiments: under-aged, undergraduate or disadvantaged people may be considered as \emph{de facto} excluded from the experiments. Taking that into consideration, the correct $ n $ is probably somewhere in the range $ [5000, 14000] $. We propose to set $ n = 8,500 $, which corresponds to about 72,000,000 potential subjects.
	\item[Exponent] In \cite{kleinberg10networks}, Kleinberg investigates how to relate the $r$-harmonic distribution with real-life observations. He surveys multiple social experiments and discusses the correspondence with the exponent of his model, which gives estimates of $ r $ between 1.75 and 2.2.
	\item[Neighborhood] The default value $ p = q = 1 $ means that there are no more than five ``acquaintances'' per node. This is quite small compared to what is observed in real-life social networks. For example, the famous Dunbar's number, which estimates the number of \emph{active} relationships, is 150 \cite{dunbar}. More recent studies seem to indicate that the average number of acquaintances is larger, ranging from 250 to 1500 (see \cite{pool1978contacts,mccormick10people,wellman:dunbar} and references within). We propose to set $ p $ and $ q $ so that the neighborhood size $ 2p(p+1) + q $ is about 600, the value reported in \cite{mccormick10people}. Regarding the partition between local links ($ p $) and shortcuts ($ q $), we consider three typical scenarios:
	\begin{itemize}
		\item $ p = 1 $, $ q = 600 $ (shortcut scenario: the neighborhood is almost exclusively made of shortcuts, and local links are only here to ensure the termination of greedy routing).
		\item $ p = 10 $, $ q = 380 $ (balanced scenario).
		\item $ p = 15 $, $ q = 120 $ (local scenario, with a value of $ q $ not too far from Dunbar's number).
	\end{itemize}
\end{description}

Having set all parameters, we can evaluate the performance of greedy routing. The results are displayed in Figure \ref{fig:six}. We observe that the expected delivery time roughly stands between five and six for a wide range of exponents.
\begin{itemize}
	\item $ r\in [1.4, 2.3] $ for the shortcut scenario.
	\item $ r\in [1.3, 2.3] $ for the balanced scenario.
	\item $ r\in [1.3, 2] $ for the local scenario.		
\end{itemize}
Except for the local scenario, which leads to slightly higher routing times for $ r>2 $, the six degrees of separation are achieved for all values of $ r $ that are consistent with the observations surveyed in \cite{kleinberg10networks}. This allows to answer our question: 
the augmented grid proposed by Kleinberg is indeed a good model to explain the \emph{six degrees of separation} phenomenon.

\begin{figure}
\centering
\pgfplotstableread{six_degree2_n_8500_p_1_q_600_10000_runs.txt}\fn
\pgfplotstableread{six_degree2_n_8500_p_10_q_380_10000_runs.txt}\fd
\pgfplotstableread{six_degree2_n_8500_p_15_q_120_10000_runs.txt}\ft
\begin{tikzpicture}
\begin{axis}[
tikzDefaults,
width=0.65\textwidth,
xlabel=$ r $,
ylabel= {Expected Delivery Time ($ n = 8,500 $)},
legend pos=south west,
ymax = 20,
ymin = 0
]
\addplot table [x=r, y = apl]{\fn};
\addlegendentry{$ p = 1, q = 600 $}
\addplot table [x=r, y = apl]{\fd};
\addlegendentry{$ p = 10, q = 380 $}
\addplot table [x=r, y = apl]{\ft};
\addlegendentry{$ p = 15, q = 120 $}
\addplot+ [style=dashed, no markers] coordinates
{(0,6) 	(3,6)};
\end{axis}
\end{tikzpicture}
\caption{Performance of greedy routing for parameters inspired by Milgram's experiments.}
\label{fig:six}
\end{figure}
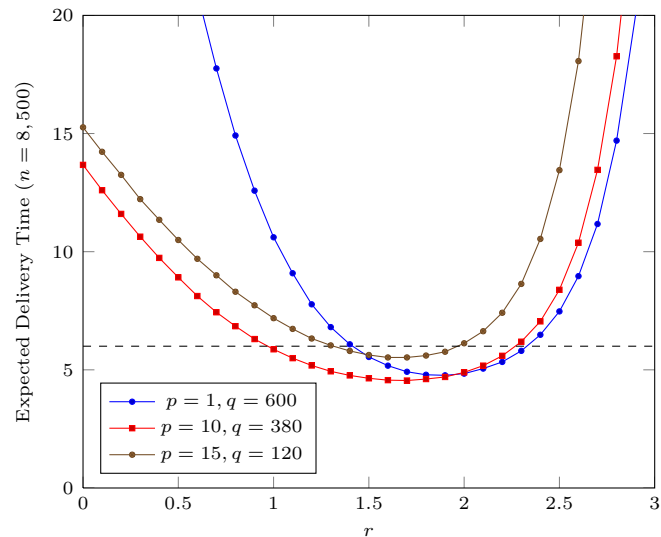

\newpage

\section{Conclusion}

We proposed an algorithm to evaluate the performance of greedy routing in Kleinberg's grid. Fueled by a dynamic rejection sampling approach, the simulator based on our solution performs several orders of magnitude faster than previous attempts. It allowed us to investigate greedy routing under multiple perspective.
\begin{itemize}
	\item We noted that the performance of greedy routing is less sensitive to the choice of the exponent $ r $ than predicted by the asymptotic behavior, even for very large grids.
	\item We observed that the bounds proposed in \cite{Kleinberg00thesmall-world} are not tight except for $ r=0 $ and $ r = 2 $. We conjectured that the tight bounds are the 2-dimensional equivalent of bounds proposed in \cite{fraigniaud01efficient} for the 1-dimensional ring.
	\item We claimed that the model proposed by Kleinberg in \cite{Kleinberg00thesmall-world,Kleinberg00navigation} is a good model for the \emph{six degrees of separation}, in the sense that it is very simple \emph{and} accurate.
\end{itemize}

Our simulator is intended  as a tool to suggest and evaluate theoretical results, and possibly to build another bridge between theoretical and empirical study of social systems. We hope that it will be useful for researchers from both communities. 
In a future work, we plan to make our simulator more generic so it can handle other types of graph and augmenting schemes.

\newpage

\appendix

\section{Code for Expected Delivery Time (tested on Julia Version 0.4.5)}\label{sec:code}

\begin{lstlisting}
using StatsBase # For using Julia built-in sample function

function shortcuts_bulk(n, probas, bulk_size)
	radii = sample(1:(2*n-2), probas, bulk_size)
	shortcuts = Tuple{Int, Int}[]
	for i = 1:bulk_size
		radius = radii[i]
		angle = floor(4*radius*rand())-2*radius
		push!(shortcuts, ((radius-abs(angle)),
			(sign(angle)*(radius-abs(radius-abs(angle))))))
	end
	return shortcuts
end

# Estimates the Expected Delivery Time for G(n, r, p, q) over R runs
function edt(n, r, p, q, R)
	bulk_size = n
	probas = weights(1./(1:(2*n-2)).^(r-1))
	shortcuts = shortcuts_bulk(n, probas, bulk_size)
	steps = 0
	for i = 1:R
		# s: start/current node; a: target node; d: distance to target
		s_x, s_y, a_x, a_y = tuple(rand(0:(n-1), 4)...) 
		d = abs(s_x - a_x) + abs(s_y - a_y)
		while d>0
			sh_x, sh_y = -1, -1 # sh will be best shortcut node
			d_s = 2*n # d_s will be distance from sh to a
			for j = 1:q # Draw q shortcuts
				ch_x, ch_y = -1, -1 # ch will be current shortcut
				c_s = 2*n # c_s will be distance from ch to a
				# Dynamic rejection sampling
				while (ch_x < 0 || ch_x >= n || ch_y < 0 || ch_y >= n)
					r_x,r_y = pop!(shortcuts)
					ch_x, ch_y = s_x + r_x,  s_y + r_y
					if isempty(shortcuts)
						shortcuts = shortcuts_bulk(n, probas, bulk_size)
					end
				end
				c_s = abs(a_x - ch_x) + abs(a_y - ch_y)
				if c_s < d_s # maintain best shortcut found
					d_s = c_s
					sh_x, sh_y = ch_x, ch_y
				end
			end
			if d_s < d-p # Follow shortcut if efficient
				s_x, s_y  = sh_x, sh_y
				d = d_s
			else # Follow local links
				d = d - p
				delta_x = min(p,abs(a_x - s_x))
				delta_y = p - delta_x
				s_x += delta_x*sign(a_x - s_x)
				s_y += delta_y*sign(a_y - s_y)
			end
			steps += 1
		end
	end
	steps /= R
	return steps
end
\end{lstlisting}


\begin{thebibliography}{10}

\bibitem{athan10high}
Stavros Athanassopoulos, Christos Kaklamanis, Ilias Laftsidis, and Evi
  Papaioannou.
\newblock An experimental study of greedy routing algorithms.
\newblock In {\em High Performance Computing and Simulation (HPCS), 2010
  International Conference on}, pages 150--156, June 2010.

\bibitem{fraigniaud01efficient}
Lali Barri\`{e}re, Pierre Fraigniaud, Evangelos Kranakis, and Danny Krizanc.
\newblock Efficient routing in networks with long range contacts.
\newblock In {\em Proceedings of the 15th International Conference on
  Distributed Computing}, DISC '01, pages 270--284, London, UK, 2001.

\bibitem{pool1978contacts}
Ithiel de~Sola~Pool and Manfred Kochen.
\newblock Contacts and influence.
\newblock {\em Social Networks}, 1:5--51, 1978.

\bibitem{dunbar}
R.~I.~M. Dunbar.
\newblock {Neocortex size as a constraint on group size in primates}.
\newblock {\em Journal of Human Evolution}, 22(6):469--493, June 1992.

\bibitem{kleinberg10networks}
David Easley and Jon Kleinberg.
\newblock The small-world phenomenon.
\newblock In {\em Networks, Crowds, and Markets: Reasoning About a Highly
  Connected World}, chapter~20, pages 611--644. Cambridge University Press,
  2010.

\bibitem{fraigniaud09universal}
Pierre Fraigniaud, Cyril Gavoille, Adrian Kosowski, Emmanuelle Lebhar, and Zvi
  Lotker.
\newblock {Universal Augmentation Schemes for Network Navigability: Overcoming
  the $\sqrt(n)$-Barrier}.
\newblock {\em {Theoretical Computer Science}}, 410(21-23):1970--1981, 2009.

\bibitem{fraigniaud2006eclectism}
Pierre Fraigniaud, Cyril Gavoille, and Christophe Paul.
\newblock Eclecticism shrinks even small worlds.
\newblock {\em Distributed Computing}, 18(4):279--291, 2006.

\bibitem{stoc2010_searchability}
Pierre Fraigniaud and George Giakkoupis.
\newblock On the searchability of small-world networks with arbitrary
  underlying structure.
\newblock In {\em Proceedings of the 42nd {ACM} Symposium on Theory of
  Computing (STOC)}, pages 389--398, June~6--8 2010.

\bibitem{fraigniaud14greedy}
Pierre Fraigniaud and George Giakkoupis.
\newblock Greedy routing in small-world networks with power-law degrees.
\newblock {\em Distributed Computing}, 27(4):231--253, 2014.

\bibitem{karinthy29chains}
Frigyes Karinthy.
\newblock Láncszemek, 1929.

\bibitem{Kleinberg00navigation}
Jon Kleinberg.
\newblock Navigation in a small world.
\newblock {\em Nature}, August 2000.

\bibitem{Kleinberg00thesmall-world}
Jon Kleinberg.
\newblock The small-world phenomenon: An algorithmic perspective.
\newblock In {\em in Proceedings of the 32nd ACM Symposium on Theory of
  Computing}, pages 163--170, 2000.

\bibitem{Liben-Nowell16082005}
David Liben-Nowell, Jasmine Novak, Ravi Kumar, Prabhakar Raghavan, and Andrew
  Tomkins.
\newblock Geographic routing in social networks.
\newblock {\em Proceedings of the National Academy of Sciences of the United
  States of America}, 102(33):11623--11628, 2005.

\bibitem{mankun04eighbor}
Gurmeet~Singh Manku, Moni Naor, and Udi Wieder.
\newblock Know thy neighbor's neighbor: The power of lookahead in randomized
  {P2P} networks.
\newblock In {\em Proceedings of the Thirty-sixth Annual ACM Symposium on
  Theory of Computing (STOC)}, pages 54--63. ACM, 2004.

\bibitem{martel04analyzing}
Chip Martel and Van Nguyen.
\newblock Analyzing kleinberg's (and other) small-world models.
\newblock In {\em Proceedings of the Twenty-third Annual ACM Symposium on
  Principles of Distributed Computing}, PODC '04, pages 179--188, New York, NY,
  USA, 2004. ACM.

\bibitem{mccormick10people}
Tyler~H. McCormick, Matthew~J. Salganik, and Tian Zheng.
\newblock How many people do you know?: Efficiently estimating personal network
  size.
\newblock {\em Journal of the American Statistical Association},
  105(489):59--70, 2010.

\bibitem{milgram67small}
Stanley Milgram.
\newblock The small world problem.
\newblock {\em Psychology Today}, 67(1):61--67, 1967.

\bibitem{Mitzenmacher:2005:PCR:1076315}
Michael Mitzenmacher and Eli Upfal.
\newblock {\em Probability and Computing: Randomized Algorithms and
  Probabilistic Analysis}.
\newblock Cambridge University Press, New York, NY, USA, 2005.

\bibitem{golden}
William~H. Press, Saul~A. Teukolsky, William~T. Vetterling, and Brian~P.
  Flannery.
\newblock Minimization or maximization of functions.
\newblock In {\em Numerical Recipes 3rd Edition: The Art of Scientific
  Computing}, chapter~10. Cambridge University Press, New York, NY, USA, 3
  edition, 2007.

\bibitem{milgram69experimental}
Jeffrey Travers and Stanley Milgram.
\newblock An experimental study of the small world problem.
\newblock {\em Sociometry}, 32:425--443, 1969.

\bibitem{rejection}
John von Neumann.
\newblock Various techniques used in connection with random digits.
\newblock {\em Nat. Bureau Standards}, 12:36–38, 1951.

\bibitem{wellman:dunbar}
Barry Wellman.
\newblock Is {Dunbar's} number up?
\newblock {\em British Journal of Psychology}, 2011.

\end{thebibliography}
\end{document}